\documentclass[10pt, twocolumn]{IEEEtran}
\usepackage{hyperref}
\usepackage{amsfonts}
\usepackage{textcomp}
\usepackage{stfloats}
\usepackage{url}
\usepackage{graphicx}
\usepackage{cite}
\usepackage{subcaption}
\usepackage{dirtytalk}
\usepackage{multirow}

\usepackage{acronym}

\newacro{gd}[GD]{gradient decent}
\newacro{d-gd}[D-GD]{distributed gradient decent}
\newacro{sgd}[SGD]{stochastic gradient decent}

\newacro{fl}[FL]{federated learning}
\newacro{ml}[ML]{machine learning}

\newacro{ps}[PS]{parameter server}
\newacro{mmse}[MMSE]{minimum mean square error}
\newacro{se}[SE]{square error}

\newacro{ls}[LS]{least square}
\newacro{lse}[LSE]{least square error}

\newacro{mac}[MAC]{multiple access channel}

\newacro{urc}[URC]{unbiased random compressor}
\newacro{brc}[BRC]{biased random compressor}
\newacro{upc}[UPC]{unbiased predictive compressor}


\hyphenation{op-tical net-works semi-conduc-tor IEEE-Xplore}

\usepackage{amssymb,amsmath,amsfonts}
\usepackage{amsthm,thmtools}
\usepackage{bm,bbm}
\usepackage{mathrsfs}
\usepackage{mathtools}
\usepackage[ruled, vlined, linesnumbered]{algorithm2e}
\let\oldnl\nl
\newcommand{\nonl}{\renewcommand{\nl}{\let\nl\oldnl}}

\SetCommentSty{mycommfont}

\usepackage{suffix}

\usepackage{array}

\PassOptionsToPackage{table}{xcolor}
\PassOptionsToPackage{usenames}{xcolor}
\PassOptionsToPackage{dvipsnames}{xcolor}

\usepackage[framemethod=TikZ]{mdframed}

\newtheorem{theorem}{Theorem}[section]

\newtheorem{lemma}[theorem]{Lemma}

\newtheorem*{theorem*}{Theorem}
\newtheorem*{corollary*}{Corollary}
\newtheorem*{lemma*}{Lemma}

\newtheorem{definition}{Definition}[section]
\newtheorem*{definition*}{Definition}

\newtheorem{assumption}{Assumption}[section]
\newtheorem*{assumption*}{Assumption}

\newtheorem{remark}{Remark}[section]
\newtheorem*{remark*}{Remark}

\declaretheoremstyle[
	headfont=\bfseries,
	notefont=\mdseries, notebraces={(}{)},
	bodyfont=\normalfont,
	headpunct={\textnormal{:}},
	postheadspace=\newline,
	spacebelow=\parsep,
	spaceabove=\parsep,
	mdframed={
		backgroundcolor=red!02,
		linecolor=red!80,
		linewidth=0.3mm,
		innertopmargin=6pt,
		innerbottommargin=6pt,
		roundcorner=10pt, }
]{red style name}

\declaretheoremstyle[
	headfont=\bfseries,
	notefont=\mdseries, notebraces={(}{)},
	bodyfont=\normalfont,
	headpunct={\textnormal{:}},
	postheadspace=\newline,
	spacebelow=\parsep,
	spaceabove=\parsep,
	mdframed={
		backgroundcolor=blue!02,
		linecolor=blue!80,
		linewidth=0.3mm,
		innertopmargin=6pt,
		innerbottommargin=6pt,
		roundcorner=10pt, }
]{blue style name}

\declaretheoremstyle[shaded={rulecolor=Lavender,
	rulewidth=2pt, bgcolor={rgb}{1,1,1}},spaceabove=6pt, spacebelow=6pt,
headfont=\normalfont\bfseries,
notefont=\mdseries, notebraces={(}{)},
bodyfont=\normalfont,
postheadspace=1em,
]{aaaaa}

\usepackage{etoolbox}
\newcommand{\subscriptifnonempty}[1]{\ifstrempty{#1}{}{_{#1}}}

\makeatletter
\newcommand{\@somedarnpadding}{\mathchoice
  {\mkern-4mu}
  {\mkern-4mu}
  {}
  {}
}
\newcommand{\@somedarnpaddingstar}{\mathchoice
  {}
  {}
  {}
  {}
}
\newcommand{\@somedarnpaddingwithsubscript}{\mathchoice
  {\mkern-4mu}
  {\mkern-4mu}
  {}
  {}
}
\newcommand{\@somedarnpaddingwithsubscriptstar}{\mathchoice
  {}
  {}
  {}
  {}
}
\makeatother





\makeatletter
\newcommand{\cbigoplus}{\DOTSB\cbigoplus@\slimits@}
\newcommand{\cbigoplus@}{\mathop{\widehat{\bigoplus}}}
\makeatother

\makeatletter
\newcommand{\cbigotimes}{\DOTSB\cbigotimes@\slimits@}
\newcommand{\cbigotimes@}{\mathop{\widehat{\bigotimes}}}
\makeatother



\makeatletter
\DeclareMathOperator{\E}{\mathbb{E}}
\newcommand\EX[2][]{{\E\subscriptifnonempty{#1}\ifstrempty{#1}{\@somedarnpadding}{\@somedarnpaddingwithsubscript}\left\{ #2 \right\}}}
\WithSuffix\newcommand\EX*[2][]{{\E\subscriptifnonempty{#1}\ifstrempty{#1}{\@somedarnpaddingstar}{\@somedarnpaddingwithsubscriptstar}\{ #2 \}}}
\makeatother

\makeatletter
\DeclareMathOperator{\Var}{\text{Var}} 
\DeclareMathOperator{\Cov}{\text{Cov}} 
\newcommand{\VarX}[2][]{{\Var\subscriptifnonempty{#1}\ifstrempty{#1}{\@somedarnpadding}{\@somedarnpaddingwithsubscript}\left\{ #2 \right\}}}
\newcommand{\CovX}[3][]{{\Cov\subscriptifnonempty{#1}\ifstrempty{#1}{\@somedarnpaddingstar}{\@somedarnpaddingwithsubscriptstar}\left\{ #2,#3 \right\}}}
\makeatother


\newcommand\abs[1]{{\left\lvert#1\right\rvert}}
\WithSuffix\newcommand\abs*[1]{{\lvert#1\rvert}}

\newcommand\scalarprod[2]{\langle #1,#2 \rangle}
\WithSuffix\newcommand\scalarprod*[2]{\left\langle #1,#2\right\rangle}




\newcommand{\set}[1]{{\left\{#1\right\}}}
\WithSuffix\newcommand\set*[1]{{\{#1\}}}


\DeclareMathOperator{\Entropy}{\textit{H}}
\newcommand{\HX}[2][]{{\Entropy\subscriptifnonempty{#1}\left\{ #2 \right\}}}
\WithSuffix\newcommand\HX*[2][]{{\Entropy\subscriptifnonempty{#1}\{ #2 \}}}


\DeclareMathOperator*{\argmin}{\arg\!\min}

\newcommand{\minimize}{\mathop{\rm minimize}\limits}



\newcommand{\vect}[1]{\bm{#1}}


\DeclareMathOperator{\rank}{rank}
\newcommand{\transpose}{^{\mkern-1.5mu\mathsf{T}}}



\newcommand{\zerovec}[1]{\mathbf{0}\subscriptifnonempty{#1}}
\DeclareMathAlphabet{\mathbbx} {U}{BOONDOX-ds}{b}{n}


\newcommand\norm[2][]{\left\lVert #2 \right\rVert\subscriptifnonempty{#1}}
\WithSuffix\newcommand\norm*[2][]{\Vert #2 \Vert\subscriptifnonempty{#1}}

\newcommand{\mnorm}[2][]{
		\left\vert\kern-0.25ex\left\vert\kern-0.25ex\left\vert
		#2
		\right\vert\kern-0.25ex\right\vert\kern-0.25ex\right\vert\subscriptifnonempty{#1}
}



\newcommand{\bernoulli}[1]{{\text{Bernoulli}\left( #1 \right)}}


\makeatletter
\newcommand{\@letterdotthing}[1]{\ifstrempty{#1}{\cdot}{#1}}
\newcommand{\@letterthing}[2]{\ensuremath{#1\@somedarnpadding\left(\@letterdotthing{#2}\right)}}
\newcommand{\@letterthingstar}[2]{\ensuremath{#1(\@letterdotthing{#2})}}

\newcommand{\BB}{\mathcal{B}}
\newcommand{\BBX}[1]{\@letterthing{\BB}{#1}}
\WithSuffix\newcommand\BBX*[1]{\@letterthingstar{\BB}{#1}}

\newcommand{\CC}{\mathcal{C}}
\newcommand{\CCX}[1]{\@letterthing{\CC}{#1}}
\WithSuffix\newcommand\CCX*[1]{\@letterthingstar{\CC}{#1}}

\newcommand{\KK}{\mathcal{K}}
\newcommand{\KKX}[1]{\@letterthing{\KK}{#1}}
\WithSuffix\newcommand\KKX*[1]{\@letterthingstar{\KK}{#1}}

\newcommand{\PP}{\mathcal{P}}
\newcommand{\PPX}[1]{\@letterthing{\PP}{#1}}
\WithSuffix\newcommand\PPX*[1]{\@letterthingstar{\PP}{#1}}

\newcommand{\QQ}{\mathcal{Q}}
\newcommand{\QQX}[1]{\@letterthing{\QQ}{#1}}
\WithSuffix\newcommand\QQX*[1]{\@letterthingstar{\QQ}{#1}}
\makeatother

\newcommand{\reals}{\mathbb{R}}

\newcommand{\naturals}{\mathbb{N}}
\newcommand{\integers}{\mathbb{Z}}


\def\shrug{\texttt{\raisebox{0.75em}{\char`\_}\char`\\\char`\_\kern-0.5ex(\kern-0.25ex\raisebox{0.25ex}{\rotatebox{45}{\raisebox{-.75ex}"\kern-1.5ex\rotatebox{-90})}}\kern-0.5ex)\kern-0.5ex\char`\_/\raisebox{0.75em}{\char`\_}}}


\usepackage{tabularx}
\usepackage{graphicx}
\PassOptionsToPackage{table}{xcolor}
\PassOptionsToPackage{usenames}{xcolor}
\PassOptionsToPackage{dvipsnames}{xcolor}

\usepackage{pgf}
\usepackage{pgfplots}
\usepackage{tikz}
\usepackage{tikz-3dplot}

\usetikzlibrary{positioning}
\usetikzlibrary{decorations.pathreplacing}
\usetikzlibrary{plotmarks}
\usetikzlibrary{patterns}
\usetikzlibrary{calc}
\usetikzlibrary{spy}
\usetikzlibrary{fit}
\usetikzlibrary{backgrounds}
\usepgfplotslibrary{groupplots}
\pgfplotsset{
	compat=1.18,
}
\usepackage{pgfplotstable}

\pgfplotsset{filter discard warning=false}
\pgfplotsset{every axis plot/.append style={line width = 2pt},axis background/.style={fill=white}}

\usetikzlibrary{external}
\usepgfplotslibrary{external}


\makeatletter
\DeclareRobustCommand{\rvdots}{%
	\vbox{
		\baselineskip5\p@\lineskiplimit\z@
		\kern-\p@
		\hbox{.}\hbox{.}\hbox{.}
}}
\makeatother

\tikzset{syscircle/.style={
		draw,
		circle,
		minimum size=0.6cm,
		fill=blue!05},
}
\tikzset{sysblock/.style={
		draw,
		minimum width=1.1cm,
		minimum height=1.0cm,
		fill=red!05},
}
\tikzset{sysline/.style={
		->,
		thick
	},
}
\tikzset{syslinerev/.style={
		sysline,
		<-,
	},
}
\tikzset{sysjunction/.style={
		circle,
		minimum size=1.5mm,
		inner sep=0,
		fill=black
	},
}
\tikzset{sysfitblock/.style={
		inner sep=5mm,
		fill=cyan!05,
		densely dashed,
		thick,
		draw=black,
		rounded corners=8pt,
	},
}
\tikzset
{
	port/.style     = {inner sep=0pt, font=\tiny},
	cross/.style =
	{%
		path picture=%
		{
			\draw
			(path picture bounding box.north west) --
			(path picture bounding box.south east)
			(path picture bounding box.south west) --
			(path picture bounding box.north east)
			;
		}
	},
	syssum/.style n args = {4}%
	{%
		syscircle, node distance = 2cm, minimum size=9mm, alias=sum,
		append after command=%
		{%
			node at (sum.north) [port, below=1pt] {$#1$}
			node at (sum.west) [port, right=1pt] {$#2$}
			node at (sum.south) [port, above=1pt] {$#3$}
			node at (sum.east) [port, left=1pt] {$#4$}
			node at (sum) [port, font=\small] {$\sum$}
		},
	},
}

\tikzset{wireless/.pic=
	{
		\node {\small Placeholder};
	}
}

\tikzset{sysswitch/.pic=
	{
		\node[sysjunction,alias=tttt] at (-4mm,0) {};
		\node[sysjunction] at (4mm,0) {};
		\draw[sysline,-] (tttt) to (3.5mm,3mm);
		\path (-4mm,0) to (3.5mm,-3mm);
	}
}

\tikzset{
	pics/portal/.style args={#1/#2}{
		code={
			\path[fill=#2,line width=0.0cm] (0.01, -1.39).. controls (0.05, -1.37)
			and (0.1, -1.34) .. (0.07, -1.35).. controls (0.03, -1.36) and (-0.03, -1.37)
			.. (-0.09, -1.36).. controls (-0.27, -1.35) and (-0.42, -1.22) .. (-0.53,
			-0.99).. controls (-0.59, -0.85) and (-0.64, -0.68) .. (-0.63, -0.63)..
			controls (-0.63, -0.62) and (-0.62, -0.62) .. (-0.62, -0.62).. controls
			(-0.62, -0.62) and (-0.6, -0.67) .. (-0.58, -0.72).. controls (-0.47, -1.02)
			and (-0.28, -1.21) .. (-0.08, -1.21).. controls (0.08, -1.21) and (0.29,
			-1.07) .. (0.43, -0.86).. controls (0.59, -0.63) and (0.7, -0.34) .. (0.75,
			-0.06).. controls (0.78, 0.1) and (0.79, 0.17) .. (0.79, 0.34).. controls
			(0.79, 0.55) and (0.77, 0.71) .. (0.72, 0.89) -- (0.7, 0.96) -- (0.71, 0.75)..
			controls (0.71, 0.44) and (0.7, 0.28) .. (0.64, 0.13).. controls (0.62, 0.07)
			and (0.58, 0.01) .. (0.58, 0.02).. controls (0.57, 0.03) and (0.58, 0.08) ..
			(0.59, 0.15).. controls (0.61, 0.3) and (0.62, 0.51) .. (0.61, 0.65)..
			controls (0.59, 1.02) and (0.47, 1.25) .. (0.26, 1.31).. controls (0.15, 1.35)
			and (0.08, 1.34) .. (-0.04, 1.28).. controls (-0.23, 1.18) and (-0.38, 1.02)
			.. (-0.5, 0.78).. controls (-0.61, 0.56) and (-0.69, 0.27) .. (-0.75, -0.13)..
			controls (-0.78, -0.31) and (-0.78, -0.55) .. (-0.76, -0.7).. controls
			(-0.73, -0.92) and (-0.66, -1.09) .. (-0.55, -1.21).. controls (-0.48, -1.3)
			and (-0.36, -1.37) .. (-0.25, -1.4).. controls (-0.16, -1.42) and (-0.05,
			-1.42) .. (0.01, -1.39) -- cycle;

			\path[fill=#1,line width=0.0cm] (-0.3, -1.59).. controls (-0.44, -1.57)
			and (-0.53, -1.52) .. (-0.64, -1.42).. controls (-0.86, -1.19) and (-0.98,
			-0.88) .. (-1.0, -0.46).. controls (-1.01, -0.25) and (-0.97, 0.15) .. (-0.91,
			0.42).. controls (-0.84, 0.78) and (-0.69, 1.11) .. (-0.54, 1.29).. controls
			(-0.5, 1.33) and (-0.45, 1.38) .. (-0.44, 1.37).. controls (-0.44, 1.37) and
			(-0.45, 1.34) .. (-0.47, 1.3).. controls (-0.58, 1.13) and (-0.72, 0.82) ..
			(-0.72, 0.77).. controls (-0.72, 0.75) and (-0.72, 0.75) .. (-0.71, 0.77)..
			controls (-0.55, 1.07) and (-0.36, 1.32) .. (-0.17, 1.45).. controls (-0.1,
			1.5) and (0.02, 1.57) .. (0.09, 1.58).. controls (0.32, 1.64) and (0.59, 1.54)
			.. (0.76, 1.34).. controls (0.88, 1.2) and (0.96, 1.0) .. (0.99, 0.76)..
			controls (1.0, 0.67) and (1.0, 0.34) .. (0.99, 0.23).. controls (0.96, -0.05)
			and (0.84, -0.47) .. (0.71, -0.78).. controls (0.5, -1.25) and (0.24, -1.53)
			.. (-0.06, -1.59).. controls (-0.12, -1.6) and (-0.24, -1.6) .. (-0.3, -1.59)
			-- cycle(0.01, -1.39).. controls (0.05, -1.37) and (0.1, -1.34) .. (0.07,
			-1.35).. controls (0.03, -1.36) and (-0.03, -1.37) .. (-0.09, -1.36)..
			controls (-0.27, -1.35) and (-0.42, -1.22) .. (-0.53, -0.99).. controls
			(-0.59, -0.85) and (-0.64, -0.68) .. (-0.63, -0.63).. controls (-0.63, -0.62)
			and (-0.62, -0.62) .. (-0.62, -0.62).. controls (-0.62, -0.62) and (-0.6,
			-0.67) .. (-0.58, -0.72).. controls (-0.47, -1.02) and (-0.28, -1.21) ..
			(-0.08, -1.21).. controls (0.08, -1.21) and (0.29, -1.07) .. (0.43, -0.86)..
			controls (0.59, -0.63) and (0.7, -0.34) .. (0.75, -0.06).. controls (0.78,
			0.1) and (0.79, 0.17) .. (0.79, 0.34).. controls (0.79, 0.55) and (0.77, 0.71)
			.. (0.72, 0.89) -- (0.7, 0.96) -- (0.71, 0.75).. controls (0.71, 0.44) and
			(0.7, 0.28) .. (0.64, 0.13).. controls (0.62, 0.07) and (0.58, 0.01) .. (0.58,
			0.02).. controls (0.57, 0.03) and (0.58, 0.08) .. (0.59, 0.15).. controls
			(0.61, 0.3) and (0.62, 0.51) .. (0.61, 0.65).. controls (0.59, 1.02) and
			(0.47, 1.25) .. (0.26, 1.31).. controls (0.15, 1.35) and (0.08, 1.34) ..
			(-0.04, 1.28).. controls (-0.23, 1.18) and (-0.38, 1.02) .. (-0.5, 0.78)..
			controls (-0.61, 0.56) and (-0.69, 0.27) .. (-0.75, -0.13).. controls (-0.78,
			-0.31) and (-0.78, -0.55) .. (-0.76, -0.7).. controls (-0.73, -0.92) and
			(-0.66, -1.09) .. (-0.55, -1.21).. controls (-0.48, -1.3) and (-0.36, -1.37)
			.. (-0.25, -1.4).. controls (-0.16, -1.42) and (-0.05, -1.42) .. (0.01, -1.39)
			-- cycle;

		}
	}
}


\pgfplotscreateplotcyclelist{cyclelist}{
	blue, solid, every mark/.append style={fill=blue}, mark=*,mark repeat={10},mark size=1pt\\
	red,  solid, every mark/.append style={fill=red}, mark=triangle*,mark repeat={10},mark size=1pt\\
	cyan, dashdotdotted \\
	red, densely dashed,every mark/.append style={fill=red}, mark=diamond*,mark repeat={10},mark size=1pt, mark phase = {5},\\
	darkgray, densely dotted\\
	orange, dashdotted\\
}

\pgfplotsset{singlecolumnsize/.style={
		width = 8.6cm,
		height = 4.8 cm,
	},
}



\tikzexternalize

\let\hat\widehat
\let\tilde\widetilde

\usepackage[noabbrev,capitalise]{cleveref}

\crefname{assumption}{Assumption}{Assumptions}
\Crefname{assumption}{Assumption}{Assumptions}

\hypersetup{
	colorlinks,
	linkcolor={black},
	citecolor={blue!80!black},
	urlcolor={blue!80!black}
}

\title{Temporal Predictive Coding for Gradient Compression in Distributed Learning}
\author{Adrian Edin$^*$, Zheng Chen$^*$, Michel Kieffer$^\dag$, and Mikael Johansson$^\ddag$\\[1mm]
$^*$ Department of Electrical Engineering, Linköping University, Sweden \\Email: \{adrian.edin, zheng.chen\}@liu.se\\
$^\dag$ Université Paris-Saclay, CentraleSupelec, CNRS, Laboratoire des Signaux et Systèmes, Gif-sur-Yvette, France \\ Email: michel.kieffer@l2s.centralesupelec.fr\\
$^\ddag$ School of Electrical Engineering and Computer Science, KTH, Stockholm, Sweden  \\Email: mikaelj@kth.se
}

\newcommand{\smemory}{\ensuremath{\mathcal{M}}}

\usepackage{suffix}
\newcommand{\ts}[1]{^{(#1)}}
\WithSuffix\newcommand\ts*[1]{{\ts{#1}}}

\begin{document}

\maketitle

\begin{abstract}
This paper proposes a prediction-based gradient compression method for distributed learning with event-triggered communication. Our goal is to reduce the amount of information transmitted from the distributed agents
to the parameter server by exploiting temporal correlation in the local gradients. We use a linear predictor that \textit{combines past gradients to form a prediction of the current gradient}, with coefficients that are optimized by solving a least-square problem.
In each iteration, every agent transmits the predictor coefficients to the server such that the predicted local gradient can be computed. The difference between the true local gradient and the predicted one, termed the \textit{prediction residual, is only transmitted when its norm is above some threshold.} When this additional communication step is omitted, the server uses the prediction as the estimated gradient. This proposed design shows notable performance gains compared to existing methods in the literature, achieving convergence with reduced communication costs.
\end{abstract}

\begin{IEEEkeywords}
Distributed learning, communication efficiency, predictive coding, event-triggered communication.
\end{IEEEkeywords}

\acresetall

\section{Introduction}

In the era of the Internet of Things, numerous devices collect an ever-increasing amount of data, which can be used to train \ac{ml} models, enabling intelligent decision making. Uploading large amounts of training data to a central server generates a high communication load and raises privacy concerns. Motivated by this, distributed and collaborative \ac{ml} has emerged as an efficient solution for training a common model across multiple agents without sharing raw data \cite{verbraeken2020survey}.

Distributed learning involves an iterative process of on-device local training and server-assisted model aggregation, which requires frequent communication between the agents and the \ac{ps}. The communication bottleneck is one of the most significant challenges in distributed learning, especially when the communication links have limited capacity due to resource constraints \cite{comm-effi-dl}. 
Alleviating this bottleneck is commonly done by compressing the model updates, using quantization and/or sparsification \cite{konecny2016improvedcommeff, khirirat2018gradientcompression, alistarh2018convergencesparse,shokri2015privacy}.
Another common approach is to reduce the number or frequency of communication links, either through scheduling \cite{gurbuzbalaban2017convergenceiag, hu2023scheduling}, or using an event-triggered design \cite{zhong2010distributedevent}.
Using low-rank approximation of the gradients is another efficient method for achieving aggressive compression
\cite{wang2018atomo, vogels2019powersgd}.

During the training process, the parameters of the local models evolve slowly over time, which implies potential correlation between local models or gradients obtained at consecutive iterations.
Some recent studies have considered using predictive coding techniques for model parameter prediction and compression \cite{yue2022predictive, song2024resfed}. The key idea is to use past local model parameters to obtain a predicted version of the current model parameters. Then, the prediction residual, computed as the difference between the true and predicted values of the model parameters, is communicated between each agent and the \ac{ps}.
Similarly, \cite{mishchenko2019graddiff} proposes a framework where the gradient difference is transmitted in every iteration. This can be viewed as a special case that uses the previous gradient as a prediction of the current gradient. This prediction method was further analyzed in \cite{richtarik2021ef21} for a larger class of compressors. As an additional step, communication frequency can be further reduced by avoiding transmission when the gradient difference is small
\cite{chen2018lag, sun2022laq}.

Note that all the aforementioned studies apply some heuristic predictor designs where the prediction is calculated without accounting for the current model parameters or gradient.
In this work, we aim to further improve the performance by considering a \ac{ls} estimator and an event-triggered design to limit the amount of communicated data. First, the predicted gradient is obtained by optimally combining the past gradients, ensuring that the prediction residual is never larger than the gradient itself. Second, the residual is only transmitted when its norm exceeds a certain threshold, avoiding additional communications when the prediction is sufficiently accurate. We provide theoretical analysis on the convergence performance of our proposed approach, and verify its effectiveness in improving communication efficiency by simulation results.

\subsubsection*{Notation} Scalars are represented by regular characters, vectors are in bold, and matrices are in upper case. All vectors are by default column vectors. $\zerovec{n}\in\reals^n$ is an $n$-dimensional vector of all zeros. $\norm{\vect{a}}$ is the 2-norm of vector $\vect{a}$. $\scalarprod{\vect{a}}{\vect{b}} = \vect{a}\transpose\vect{b}$ is the standard scalar product. $A\transpose$ is the transpose of $A$ and $A^\dagger$ is the pseudo-inverse of $A$.
The expected value of a random vector is $\EX{\vect{a}}$ and its variance is $\VarX{\vect{a}} = \EX*{(\vect{a}-\EX*{\vect{a}})\transpose (\vect{a}-\EX*{\vect{a}})}=\EX{\norm{\vect{a}^2}} - \norm{\EX{\vect{a}}}^2$. $[K]$ denotes the set of integers $\set{1,...,K}$. $\reals_+$ denotes the set of positive real numbers $\set{a:0<a\in\reals}$.

\section{System Model}

Many engineering problems, including distributed learning, can be formulated as solving a finite-sum optimization problem written as
\begin{align}\label{eq:sys:original problem}
	\minimize_{\vect{x}\in\reals^{d}} f(\vect{x}) = \sum_{k = 1}^Nf_k(\vect{x}),
\end{align}
where the objective $f(\vect{x})$ is written as the sum of some local (private) objective functions $f_k(\vect{x}):\reals^d\rightarrow\reals, \forall k\in[K]$.
When the objective function is differentiable, a classical first-order optimization method  for solving \eqref{eq:sys:original problem} is known as the \ac{gd}. It entails an iterative updating procedure
\begin{align}\label{eq:sys:gd step}
	\vect{x}\ts{t+1} = \vect{x}\ts{t} - \gamma\nabla f(\vect{x}\ts{t}) = \vect{x}\ts{t} - \gamma\vect{g}\ts{t}, t=0,1,\ldots,
\end{align}
where $\vect{x}\ts{0}$ is some initial point, $\vect{x}\ts{t}$ is the  parameter vector in iteration $t\geq 1$, $\nabla f(\vect{x}\ts{t})\in\reals^d$ is the gradient of the objective function evaluated at the point $\vect{x}\ts{t}$, and $\gamma$ is a step size. For several sets of functions, \textit{e.g.}, smooth and/or convex,  \ac{gd} has well known convergence guarantees \cite{nesterov2004introductory,bottou2018optimization,khirirat2018distributed}.

In the scenario of distributed learning, a network of $K$ agents aims to train a common learning model parameterized by $\vect{x}\in\reals^{d}$ using locally available training data samples. The objective of model training is essentially optimizing a given loss function $f(\vect{x})$, which can be written as the sum of local loss functions $f_k(\vect{x})$, $\forall k \in[K]$.
Typical choices of loss function for learning tasks include $\ell_2$ loss, cross-entropy, KL divergence, \emph{etc}.

We consider a master-worker setting, where a central \ac{ps} is responsible for aggregating and synchronizing the local models across different agents. The training procedure employs \ac{d-gd} where each iteration consists of four steps \cite{comm-effi-dl}:
\begin{enumerate}
	\item The \ac{ps} broadcasts the current parameter vector $\vect{x}\ts{t}$ to the agents;
	\item Using locally available training datasets, each agent computes its local gradient $\vect{g}_k\ts{t} = \nabla f_k(\vect{x}\ts{t})$, $\forall k \in [K]$;
	\item The $K$ agents transmit their local gradients to the PS;
	\item The \ac{ps} aggregates local gradients to evaluate the global gradient
	$\vect{g}\ts{t} = \sum_{k=1}^K\vect{g}_k\ts{t} = \sum_{k=1}^K\nabla f_k(\vect{x}\ts{t})$, which is then used in \eqref{eq:sys:gd step} to update the global model.
\end{enumerate}

When the number of agents and/or the dimension of the parameter vector is very large, the amount of information that needs to be communicated becomes substantial.
This issue is particularly prominent when the communication links are rate-constrained.
Thus, compressing the local gradients before transmission becomes necessary.

Gradient compression can be accomplished in many ways, such as quantization and masking/sparsification.
One emerging method is to predict the current gradient based on information from the past gradients and only transmit the prediction difference/residual. Prediction-based compression exploits the correlation between consecutive gradients to reduce the information to be transmitted. The correlation could be due to inherent properties of the objective function, or as a consequence of regularization or momentum-based techniques \cite{bottou2018optimization,lin2018deepgradient,karimireddy2020scaffold}.

Existing works on prediction-based gradient compression for distributed learning have considered heuristic predictor designs, which cannot provide guaranteed performance improvement as compared to the non-predictive case.
In this work, we use an \ac{ls}-optimal linear predictor with compression performance guarantee and further reduce the amount of communicated data by adopting an event-triggered design.

\section{Gradient Compression with Predictive Coding}\label{sec:predictive coding}

In each iteration $t$ of the training process, agent $k$ performs gradient prediction using a \textit{memory} $\smemory_k\ts{t}$ containing past (possibly imperfect) gradient information, and \textit{predictor coefficients} $\vect{a}_k\ts{t}$. The predicted gradient is written as
\begin{equation}\label{eq:pred:predictor}
	\hat{\vect{g}}_k\ts{t} = p(\vect{a}_k\ts{t},\smemory_k\ts{t}),
\end{equation}
where $p(\cdot,\cdot)$ is the prediction function to be designed.
Here, the memory information is locally maintained and updated by each agent and by the \ac{ps} in a synchronized manner, without any required communication. The \ac{ps} maintains a copy of each agent's memory.
Given the current gradient and the predicted one, agent $k$ can compute the prediction error (residual) as
\begin{align}\label{eq:g tilde}
	\vect{e}_k\ts{t}=\vect{g}_k\ts{t}- \hat{\vect{g}}_k\ts{t}.
\end{align}
Let \CCX{} represent a certain compression scheme. Rather than transmitting $d$-dimensional vectors of full precision, each agent only transmits a few prediction coefficients $\vect{a}_k\ts{t}$ and a highly compressed prediction residual $\widetilde{\vect{e}}_k\ts{t}=\CC(\vect{e}_k\ts{t})$ to the server at each iteration.
Consequently, the original gradient cannot be perfectly reconstructed, and the \ac{ps} obtains the imperfect gradient estimate
\begin{align}
	\widetilde{\vect{g}}_k\ts{t} = \hat{\vect{g}}_k\ts{t} + \widetilde{\vect{e}}_k\ts{t}.
\end{align}
The imperfect gradient is used to update the memory $\smemory_k\ts{t+1}$ at agent $k$ and at the \ac{ps} before the next iteration.

Note that the prediction coefficients $\vect{a}_k\ts{t}$ typically have much smaller size than the full gradient, for example, it might be a single scalar parameter. Thus, we can allocate sufficient bits to represent these coefficients such that the distortion due to lossy compression of $\vect{a}_k$ is negligible.\footnote{The distortion measure may be for example the squared error distortion $\|\vect{a}_k\ts{t}-\widetilde{\vect{a}}_k\ts{t}\|^2$, where $\widetilde{\vect{a}}_k\ts{t}$ is the received compressed coefficients. In simulations, we use $16$, or $32$, bits for quantizing each prediction coefficient depending on the scenario.}

\subsection{Linear Prediction using Least Square Estimator}\label{sec:lienar predictor ls}
Our prediction and encoding design is inspired by predictive/differential coding methods that are commonly used for image compression. In this work, we consider a linear predictor that combines the past imperfect gradients to form the current prediction.

Assume that the memory contains the $s$ most recent imperfect gradients, \textit{i.e.}, $\smemory_k\ts{t} = \set{\widetilde{\vect{g}}_k\ts{t-i}}_{i = 1,...,s}$, and the predicted gradient is evaluated as
\begin{align}\label{eq:linear predictor}
	\hat{\vect{g}}_k\ts{t} =  \sum_{i = 1}^{s} a_{k,i}\ts{t}\widetilde{\vect{g}}_k\ts{t-i},
\end{align}
where $a_{k,i}\ts{t}$ is the predictor coefficient associated to the past imperfect gradient in iteration $t-i$.
The predictor in \eqref{eq:linear predictor} can be written in an equivalent matrix form as
\begin{align}\label{eq:linear predictor matrix form}
	\hat{\vect{g}}_k\ts{t}
	=
	\underbrace{\begin{pmatrix}
		\widetilde{\vect{g}}_k\ts{t-1} & \hdots & \widetilde{\vect{g}}_k\ts{t-s}
	\end{pmatrix}}_{G_k\ts{t}}
	\underbrace{\begin{pmatrix}
		a_{k,1}\ts{t}&
		\hdots&
		a_{k,s}\ts{t}
	\end{pmatrix}\transpose}_{\vect{a}_k\ts{t}},
\end{align}
where $G_k\ts{t}\in\reals^{d \times s}$ and $\vect{a}_k\ts{t}\in\reals^{s}$.

To find the optimal coefficients in the linear predictor, we use the \ac{ls} estimator, given as
\begin{equation}\label{eq:LSE solution}
	\vect{a}_k^*\ts*{t}
	=
	\argmin_{\vect{a}_k\ts{t}\in\reals^{s}} \norm{\vect{g}_k\ts{t} - G_k\ts{t}\vect{a}_k\ts{t}}^2,
\end{equation}
which has the known closed-form solution if $G_k\ts{t}$ has full column rank ($\rank G_k\ts{t} = s$)
\begin{equation}\label{eq:LSE solution:closed form}
\vect{a}_k^*\ts*{t}=(G_k\ts{t})^\dagger\vect{g}_k\ts{t},
\end{equation}
Here, we use the \ac{ls} predictor because it is simple and does not rely on any assumptions about the stationarity of the gradient distribution.

The geometrical interpretation of the \ac{ls} estimation is a projection of $\vect{g}_k\ts{t}$ onto the subspace $\mathcal{T}$ spanned by the column vectors of $G_k\ts{t}$, \textit{i.e.}, $\{\widetilde{\vect{g}}_k\ts{t-i}\}_{i=1,\ldots,s}$.
Consequently, the residual $\vect{e}_k\ts{t}$ is orthogonal to the prediction $\hat{\vect{g}}_k\ts{t}$.
This results in some favourable properties, for example, $\norm*{{\vect{e}_k\ts{t}}} \leq \norm*{\vect{g}_k\ts{t}}$  where the equality occurs if and only if $\vect{a}_k\ts*{t} = \zerovec{s}$.
This means that the prediction residual, for each agent, will never be larger (in magnitude) than the original gradient.

\subsection{Event-Triggered Residual Transmission}\label{sec:modelling g hat}

In addition to prediction-based compression, we further improve communication efficiency by avoiding transmitting residuals when the prediction is sufficiently accurate. This is in principle similar to ``event-triggered communication" in the literature of distribute optimization and control \cite{chen2018lag,zhong2010distributedevent}.

We consider a threshold-based design, where the transmission of $\vect{e}_k\ts{t}$ is omitted if $\norm*{\vect{e}_k\ts{t}}\leq e_{\text{th},k}\ts{t}$, where $e_{\text{th},k}\ts{t}\geq 0$ is a possibly time-variant threshold. The reconstructed imperfect gradient can be written as
\begin{equation}   \widetilde{\vect{g}}_k\ts{t}=\hat{\vect{g}}_k\ts{t}+\CCX*{\vect{e}_k\ts{t}},
\label{eq:reconstructed-noisy-gradient}
\end{equation}
where the ``compressed residual'' is
\begin{align}\label{eq:compressed residual}
	\CCX*{\vect{e}_k\ts{t}}
	=
	\begin{cases}
		0, & \text{if}~ \norm{\vect{e}_k\ts{t}}\leq e_{\text{th},k}\ts{t},\\
		\QQ(\vect{e}_k\ts{t}), & \text{otherwise},
	\end{cases}
\end{align}
where \QQX{} is a compression operation that maps continuous-valued residual elements to discrete-valued symbols before transmitting them over digital communication links.

\begin{figure*}[htbp]
	\centering
	\scalebox{0.95}{\includegraphics{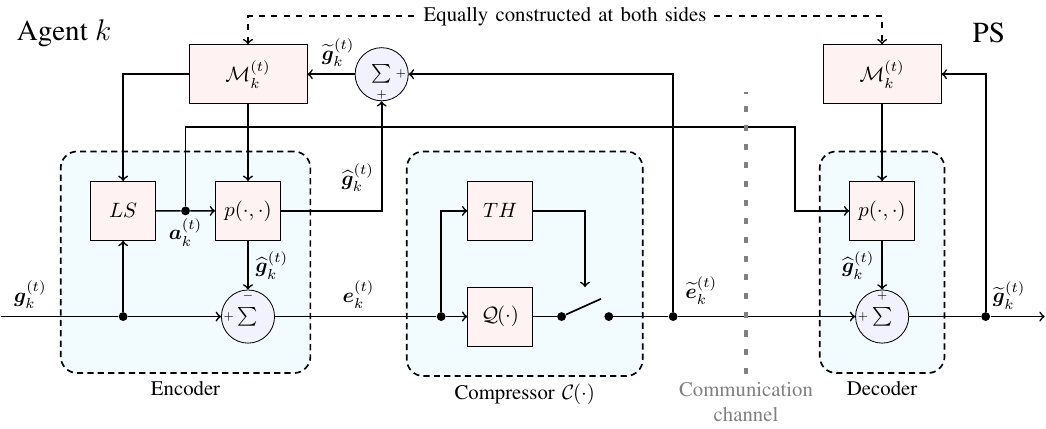}}
	\caption{Block diagram of the prediction-based gradient compression design for agent $k$. The ``compression'' block performs event-triggered transmission and classical compression. The prediction coefficients $\vect{a}_k\ts{t}$ are assumed transmitted with negligible distortion.
    }\label{fig:system block}
\end{figure*}

\begin{algorithm}[t!]
	\caption{Gradient prediction and thresholding for compression.}\label{alg:ls system}
	\DontPrintSemicolon
	\SetKwProg{MyBlock}{}{:}{}
	\SetInd{.3em}{0.4em}
	\For{$t=1,...,T$}
	{
		\ForPar{\textnormal{agents} $k=1,...,N $}
		{
			\nonl
			\MyBlock{Procedure at the agent $k$}
			{
				Compute local gradient $ \vect{g}_k\ts{t} = \nabla f_k(\vect{x}\ts{t}) $\;
				\tcp*[l]{Obtain prediction coefficients}
                Evaluate $\vect{a}^*_k\ts*{t}$ using \eqref{eq:LSE solution:closed form}\;
				\tcp*[l]{Calculate linear predictor}
                Compute $\hat{\vect{g}}_k\ts{t}$ using $\vect{a}^*_k\ts*{t}$ in \eqref{eq:linear predictor}\;
				$\vect{e}_k\ts{t} \gets \vect{g}_k\ts{t} - \hat{\vect{g}}_k\ts{t}$\;
				\If{$\norm*{\vect{e}_k\ts{t}} > e_{\textnormal{th},k}\ts{t}$}
				{
					 $\widetilde{\vect{e}}_k\ts{t} \gets \CC(\vect{e}_k\ts{t})$\;
					Transmit $ \vect{a}_k^*\ts*{t}$ and $\widetilde{\vect{e}}_k\ts{t}$ to \ac{ps}\;
					\tcp*[l]{Obtain reconstructed gradient}
                    $\widetilde{\vect{g}}_k\ts{t} \gets \hat{\vect{g}}_k\ts{t} + \widetilde{\vect{e}}_k\ts{t}$\;
				}
				\Else
				{
					Transmit $\vect{a}_k^*\ts*{t}$ to \ac{ps}\;
                    $\widetilde{\vect{g}}_k\ts{t} \gets \hat{\vect{g}}_k\ts{t}$\;
				}
                Update agent memory using $\widetilde{\vect{g}}_k\ts{t}$\;
			}
			\nonl
			\MyBlock{Procedure at the \ac{ps}}
			{
				Receive $\vect{a}_k^*\ts*{t}$, and possibly $\widetilde{\vect{e}}_k\ts{t}$\;
				Compute $\hat{\vect{g}}_k\ts{t}$ using $\vect{a}_k^*\ts*{t}$ in \eqref{eq:linear predictor}\;

				\If{ $\widetilde{\vect{e}}_k\ts{t}$ is received from agent $k$}
				{
					\tcp*[l]{Obtain reconstructed gradient}
					$\widetilde{\vect{g}}_k\ts{t} \gets \hat{\vect{g}}_k\ts{t} + \widetilde{\vect{e}}_k\ts{t}$\;
				}
				\Else
				{
                    $\widetilde{\vect{g}}_k\ts{t} \gets \hat{\vect{g}}_k\ts{t}$\;
				}
                Update \ac{ps} memory for agent $k$ using $\widetilde{\vect{g}}_k\ts{t}  $\;
			}
		}
		$ \vect{x}\ts{t+1} \gets \vect{x}\ts{t} - \gamma\sum_{k=1}^N\widetilde{\vect{g}}_k\ts{t} $\;
	}
\end{algorithm}

The complete description of our design is in \cref{alg:ls system}. A block diagram is provided in \cref{fig:system block}.
This prediction-based approach relies on the synchronized memory updating procedure to reproduce the same prediction on both sides. While in this paper we consider linear combination of past imperfect gradients, the system can use features extracted from past gradient information in an arbitrary way.

\subsection{Comparison to State-of-the-Art Methods}\label{sec:special cases}

Several existing works have considered using the past gradient(s) to make a prediction of the current gradient to reduce the communication cost.
Simple predictor designs are considered in \cite{mishchenko2019graddiff, richtarik2021ef21, chen2018lag}, where $\hat{\vect{g}}_k\ts{t}=\widetilde{\vect{g}}_k\ts{t-1}$ with different compression strategies. Using specially constructed memory updates, such as
$\smemory_k\ts{t+1} = \smemory_k\ts{t} + \beta(\widetilde{\vect{g}}_k\ts{t} - \smemory_k\ts{t})$,
we can implement momentum-style predictors \cite{chen2022sparsegraddiff}.

These aforementioned cases give no guarantee on the optimality of the predictor in any sense, and the predictor design is agnostic to the current value of the gradient. Using our method, we can scale any alternative predictor to ensure \ac{ls}-optimality. The associated communication cost is to transmit one additional scalar coefficient $a$. For example, given a predictor $\hat{\vect{g}}\ts{t}_\text{other}$, we can acquire an \ac{ls}-optimal prediction by $\hat{\vect{g}}_k\ts{t} = a \hat{\vect{g}}\ts{t}_\text{other}$ and finding the optimal $a$.

Another closely related work \cite{yue2022predictive} considers predicting the model parameters instead of the gradients. The prediction is obtained by using a set of possible prediction functions and choose the best result in every iteration.
Interestingly, despite using more sophisticated predictors, the most frequently used scheme in the numerical experiments is the model parameter difference, which is similar to the gradient difference approach but with multiple local SGD steps.

\begin{remark}
In this work, we consider using a single vector of prediction coefficients for all $d$ elements of the gradient. Our design can be generalized to the case where different prediction coefficients are used for different subsets of elements in $\vect{g}_k\ts{t}$.
This can be useful to capture a potential block structure in the gradient coefficients (\textit{e.g.} layers in a neural network).
\end{remark}

\section{Convergence Analysis of the Proposed Algorithm}
In this section, we analyze the convergence of the approach described in \cref{alg:ls system}.

{First, we rewrite \eqref{eq:reconstructed-noisy-gradient} as
\begin{align}
    &\tilde{\vect{g}}_k\ts{t} = \vect{g}_k\ts{t} -\vect{\delta}_k\ts{t},
\end{align}
where $\vect{\delta}_k\ts{t}\in\reals^d$ denotes the reconstructed gradient error (or noise), given as
\begin{align}
    \vect{\delta}_k\ts{t}=
	\begin{cases}
		{\vect{e}_k\ts{t}}, & \text{if}~ \norm*{\vect{e}_k\ts{t}}\leq e_{\text{th},k}\ts{t},\\
		\vect{e}_k\ts{t}-\QQ({\vect{e}_k\ts{t}}), & \text{otherwise}.
	\end{cases}
\end{align}
Then, the aggregated imperfect gradient can be written as
\begin{align}
    \tilde{\vect{g}}\ts{t} = \vect{g}\ts{t} - \vect{\delta}\ts{t},
\end{align}
where $\vect{\delta}\ts{t} = \sum_{k=1}^K \vect{\delta}_k\ts{t}$. We assume that $\QQ$ is an Unbiased Random Compressor according to \cref{def:unbiased compression operator}.}
\begin{definition}[Unbiased Random Compressor]\label{def:unbiased compression operator}
	A (possibly randomized) compressor $\QQ:\reals^d \rightarrow \reals^d$ is an unbiased compression operator if
	\begin{align}
		\EX{\QQ(\vect{x})} = \vect{x},
	\end{align}
	and there exists a constant $\alpha\geq 1$ such that
	\begin{align}\label{eq:urc:second moment}
		\EX{\norm{\QQ(\vect{x})}^2} \leq \alpha \norm{\vect{x}}^2.
	\end{align}
\end{definition}
\subsection{First and Second Moment Limits}
In this section, we provide some bounds on the first moment and the variance of the aggregated imperfect gradient, which will be used later for the convergence analysis.
\begin{assumption}[Bounded Threshold]\label{as:bounded threshold}
    We assume there exists a coefficient $b\ts{t} \in [0~1)$ such that
    \begin{align}\label{eq:as:bounded threshold}
        e_{\textnormal{th}}\ts{t} = \sum_{k=1}^K e_{\textnormal{th},k}\ts{t} = b\ts{t} \norm{\vect{g}\ts{t}}.
    \end{align}
    This condition \eqref{eq:as:bounded threshold} can always be met, provided that $e_{\textnormal{th},k}\ts{t}$ is adjustable over time to ensure that it remains sufficiently small relative to the gradient norm.
\end{assumption}
While Assumption \ref{as:bounded threshold} restricts the choices of $\{e_{\text{th},k}\ts{t}\}_{k\in[K]}$, it matches the intuition that the transmission threshold should decrease when the system approaches convergence as the norm of the gradient decreases.

\begin{lemma}[First Moment Bound]\label{lemma:first moment}
Given \cref{as:bounded threshold}, we can bound the first moment of $\tilde{\vect{g}}\ts{t}$ by
\begin{align}
    (1-b\ts{t}) \norm*{\vect{g}\ts{t}}^2 \leq \scalarprod{\vect{g}\ts{t}}{\EX{\tilde{\vect{g}}\ts{t}}} \leq(1+b\ts{t})\norm*{\vect{g}\ts{t}}^2.
\end{align}
\end{lemma}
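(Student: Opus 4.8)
The plan is to exploit the unbiasedness of $\QQ$ to evaluate $\EX{\tilde{\vect{g}}\ts{t}}$ in closed form up to the contribution of the non-transmitting agents, and then to control that contribution using \cref{as:bounded threshold} together with Cauchy--Schwarz. Throughout, the expectation should be read as conditional on the information available at the start of iteration $t$, so that $\vect{g}\ts{t}$, every residual $\vect{e}_k\ts{t}$, and every triggering decision are deterministic and only the fresh compressor randomness at iteration $t$ is averaged out.

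First I would partition $[K]$ into the set $S\ts{t} = \set{k : \norm*{\vect{e}_k\ts{t}} \leq e_{\textnormal{th},k}\ts{t}}$ of silent agents, for which $\vect{\delta}_k\ts{t} = \vect{e}_k\ts{t}$ deterministically, and its complement of transmitting agents, for which $\vect{\delta}_k\ts{t} = \vect{e}_k\ts{t} - \QQ(\vect{e}_k\ts{t})$. Taking the conditional expectation and invoking only the unbiasedness part of \cref{def:unbiased compression operator}, namely $\EX{\QQ(\vect{e}_k\ts{t})} = \vect{e}_k\ts{t}$, each transmitting agent contributes zero to $\EX{\vect{\delta}\ts{t}}$, so that $\EX{\vect{\delta}\ts{t}} = \sum_{k\in S\ts{t}} \vect{e}_k\ts{t}$ and hence $\EX{\tilde{\vect{g}}\ts{t}} = \vect{g}\ts{t} - \sum_{k\in S\ts{t}} \vect{e}_k\ts{t}$.

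Next I would form $\scalarprod{\vect{g}\ts{t}}{\EX{\tilde{\vect{g}}\ts{t}}} = \norm*{\vect{g}\ts{t}}^2 - \scalarprod{\vect{g}\ts{t}}{\sum_{k\in S\ts{t}} \vect{e}_k\ts{t}}$ and bound the cross term. By Cauchy--Schwarz and the triangle inequality, $\abs*{\scalarprod{\vect{g}\ts{t}}{\sum_{k\in S\ts{t}} \vect{e}_k\ts{t}}} \leq \norm*{\vect{g}\ts{t}} \sum_{k\in S\ts{t}} \norm*{\vect{e}_k\ts{t}}$. For every $k \in S\ts{t}$ we have $\norm*{\vect{e}_k\ts{t}} \leq e_{\textnormal{th},k}\ts{t}$ by the event-triggering rule, and since all thresholds are nonnegative, $\sum_{k\in S\ts{t}} \norm*{\vect{e}_k\ts{t}} \leq \sum_{k=1}^K e_{\textnormal{th},k}\ts{t} = e_{\textnormal{th}}\ts{t} = b\ts{t} \norm*{\vect{g}\ts{t}}$ by \cref{as:bounded threshold}. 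Thus the cross term is at most $b\ts{t}\norm*{\vect{g}\ts{t}}^2$ in absolute value, and the claimed two-sided bound follows at once.

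I do not expect a genuine obstacle here; the only points requiring care are the precise interpretation of $\EX{\cdot}$ as a conditional expectation over the iteration-$t$ compressor noise (so that the right-hand sides are deterministic), and the fact that the pointwise residual bound $\norm*{\vect{e}_k\ts{t}} \leq e_{\textnormal{th},k}\ts{t}$ is available only for the silent agents $k \in S\ts{t}$ — which is exactly where it is used — while the transmitting agents are handled entirely through unbiasedness.
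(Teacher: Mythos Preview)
Your proposal is correct and follows essentially the same route as the paper: split agents into transmitting and non-transmitting, use unbiasedness of $\QQ$ to kill the transmitting contribution to $\EX{\vect{\delta}\ts{t}}$, and bound the remaining cross term via Cauchy--Schwarz together with $\sum_k e_{\textnormal{th},k}\ts{t}=b\ts{t}\norm*{\vect{g}\ts{t}}$ from \cref{as:bounded threshold}. If anything, your explicit partition $S\ts{t}$ and your remark on the conditional nature of the expectation are cleaner than the paper's slightly informal ``the bias will be largest when all agents simultaneously omit'' phrasing.
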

\begin{proof}
See \cref{apx:first moment:new}.
\end{proof}

\begin{assumption}[Bounded Gradient Dissimilarity \cite{koloskova2020unified}]\label{as:gradient dissimilarity}
    We assume there exists some finite constants $G\geq0$ and $B\geq1$, such that in every iteration $t$, we have:
    \begin{align}
        \frac{1}{K}\sum_{k=1}^K \norm{\vect{g}\ts{t}_k}^2\leq G^2 + B^2\norm{\vect{g}\ts{t}}^2.
    \end{align}
\end{assumption}
\Cref{as:gradient dissimilarity} provides a way to bound the effects of local residual compression on the aggregated gradient norm. This assumption is less restrictive than the common assumption of uniformly bounded gradient norm \cite{alistarh2017qsgd}.

Next, we provide a bound on the variance of the aggregated imperfect gradient.
We define a coefficient $c_k\ts{t}\geq 0$,
such that the transmission threshold for each agent $k$ can be written as
\begin{align}
    e_{\text{th},k}\ts{t} = c_k\ts{t} \norm{\vect{g}_k\ts{t}}.
\end{align}
Note that $\{c_k\ts{t}\}_{k\in[K]}$ are design parameters that can be adjusted. A higher communication-triggering threshold reduces the transmission frequency of the prediction residual.
\begin{lemma}[Variance Bound]\label{lemma:variance bound}
  Let $c\ts{t} = \max\set*{c_k\ts{t}}_{k\in[K]}$ and
    $\bar{\alpha}\ts{t} = \max\set*{\alpha_k\ts{t}}_{k\in[K]}$ with
    $$\alpha_k\ts{t}= 1  + (\alpha-1) \frac{\norm{\vect{e}_{k}\ts{t}}^2}{\norm{\vect{g}_k\ts{t}}^2}.$$
    Then, under \cref{as:bounded threshold,as:gradient dissimilarity}, the variance of $\tilde{\vect{g}}\ts{t}$ is bounded by
    \begin{align}
        & \VarX{\tilde{\vect{g}}\ts{t}} \leq  K G^2 \max\set{\bar{\alpha}\ts{t} -1 , (c\ts{t})^2} \nonumber\\ &\mspace{100mu} + K B^2\max\set{\bar{\alpha}\ts{t} -1 , (c\ts{t})^2}\norm{\vect{g}\ts{t}}^2.
    \end{align}
\end{lemma}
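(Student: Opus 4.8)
The plan is to work conditionally on the current iterate $\vect{x}\ts{t}$ and the agents' memories, so that $\vect{g}\ts{t}$ and every residual $\vect{e}_k\ts{t}$ are deterministic and the only remaining randomness is the fresh compression applied at iteration $t$. Since $\vect{g}\ts{t}$ is then a constant, $\VarX{\tilde{\vect{g}}\ts{t}}=\VarX{\vect{\delta}\ts{t}}$. Writing $\vect{\delta}\ts{t}=\sum_{k=1}^K\vect{\delta}_k\ts{t}$ and using that the compressors at distinct agents are driven by independent randomness, the centered second moment of the sum splits into per-agent terms, giving $\VarX{\vect{\delta}\ts{t}}=\sum_{k=1}^K\VarX{\vect{\delta}_k\ts{t}}\le\sum_{k=1}^K\EX{\norm{\vect{\delta}_k\ts{t}}^2}$. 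This reduces the lemma to a per-agent second-moment estimate.

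For that estimate I would distinguish the two branches of $\vect{\delta}_k\ts{t}$. If $\norm{\vect{e}_k\ts{t}}\le e_{\textnormal{th},k}\ts{t}=c_k\ts{t}\norm{\vect{g}_k\ts{t}}$, no residual is transmitted, $\vect{\delta}_k\ts{t}=\vect{e}_k\ts{t}$ is deterministic, and $\EX{\norm{\vect{\delta}_k\ts{t}}^2}=\norm{\vect{e}_k\ts{t}}^2\le(c_k\ts{t})^2\norm{\vect{g}_k\ts{t}}^2\le(c\ts{t})^2\norm{\vect{g}_k\ts{t}}^2$. Otherwise $\vect{\delta}_k\ts{t}=\vect{e}_k\ts{t}-\QQ(\vect{e}_k\ts{t})$ has zero mean by unbiasedness of $\QQ$, so $\EX{\norm{\vect{\delta}_k\ts{t}}^2}=\EX{\norm{\QQ(\vect{e}_k\ts{t})}^2}-\norm{\vect{e}_k\ts{t}}^2\le(\alpha-1)\norm{\vect{e}_k\ts{t}}^2$ by \cref{def:unbiased compression operator}; substituting the definition of $\alpha_k\ts{t}$ rewrites this as $(\alpha_k\ts{t}-1)\norm{\vect{g}_k\ts{t}}^2\le(\bar\alpha\ts{t}-1)\norm{\vect{g}_k\ts{t}}^2$ (the degenerate case $\norm{\vect{g}_k\ts{t}}=0$ being trivial, since the orthogonality property $\norm{\vect{e}_k\ts{t}}\le\norm{\vect{g}_k\ts{t}}$ then forces that contribution to vanish, and this property also guarantees $\alpha_k\ts{t}\in[1,\alpha]$). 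Hence in both branches $\EX{\norm{\vect{\delta}_k\ts{t}}^2}\le\max\set{\bar\alpha\ts{t}-1,(c\ts{t})^2}\norm{\vect{g}_k\ts{t}}^2$.

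Summing over $k$ and applying \cref{as:gradient dissimilarity} to replace $\sum_{k=1}^K\norm{\vect{g}_k\ts{t}}^2$ by $K(G^2+B^2\norm{\vect{g}\ts{t}}^2)$ — with \cref{as:bounded threshold} ensuring the thresholds, and therefore $c_k\ts{t}$ and $c\ts{t}$, are well defined and scale with the gradient norms — then separating the $G^2$ and $B^2\norm{\vect{g}\ts{t}}^2$ contributions yields exactly the claimed inequality. The step requiring the most care is the unified per-agent bound: the thresholded branch is deterministic but \emph{biased}, entering only through $(c\ts{t})^2$, whereas the compressed branch is \emph{unbiased} with relative variance at most $\alpha-1$, entering through $\bar\alpha\ts{t}-1$, and it is precisely the need to cover both branches simultaneously that forces the $\max$ in the statement. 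I would also be careful that the ``variance of a sum equals sum of variances'' step rests on independence of the compressors across agents, not on any structural property of $\QQ$ beyond \cref{def:unbiased compression operator}.
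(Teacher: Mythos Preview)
Your proposal is correct and follows essentially the same route as the paper: split the variance across agents by independence of the compressors, bound each $\VarX{\vect{\delta}_k\ts{t}}$ separately in the two branches (thresholded vs.\ compressed), take the $\max$ to unify, sum, and apply \cref{as:gradient dissimilarity}. Your treatment is in fact slightly more careful than the paper's in two places: you make the conditioning explicit so that $\vect{g}\ts{t}$ and $\vect{e}_k\ts{t}$ are deterministic, and you pass through the second moment $\EX{\norm{\vect{\delta}_k\ts{t}}^2}$ rather than writing $\VarX{\vect{\delta}_k\ts{t}}\le(e_{\textnormal{th},k}\ts{t})^2$ directly in the no-transmit branch (where $\vect{\delta}_k\ts{t}$ is deterministic and its variance is actually zero, so the paper's inequality is trivially true but somewhat oddly phrased).
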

\begin{proof}
    See \cref{apx:variance bound:new}.
\end{proof}
\begin{remark}
    Obviously, $\bar{\alpha}_k\ts{t} \in [1,\alpha]$, as the result of $\norm*{\vect{e}_{k}\ts{t}}\leq \norm*{\vect{g}_k\ts{t}}$. When the local gradient predictions are accurate, $\bar{\alpha}\ts{t}$ is close to the lower bound $1$, whereas a large prediction residual makes it closer to $\alpha$.
\end{remark}

\begin{remark}
    For each agent, there are three sources of distortion in the reconstructed gradient $\tilde{\vect{g}}_k\ts{t}$: prediction inaccuracy, compression distortion in $\QQ(\vect{e}_k\ts{t})$, and event-triggered transmission.
    For a given value of  $\alpha_k^{(t)}$ imposed by the compressor noise variance and the prediction accuracy, any $c_k\ts{t} \leq \sqrt{\alpha_k^{(t)}-1}$ will not increase the upper bound on the variance of the reconstructed gradient $\tilde{\vect{g}}_k\ts{t}$.
\end{remark}

\subsection{Convergence Analysis}
We now present the convergence analysis for our proposed algorithm with quantized prediction residuals. Before, we introduce \cref{as:L smoo,as:mu convex}.
\begin{assumption}[name=Lipschitz continuous gradients]\label{as:L smoo}
The local loss function $f_k$ has $L_k$-Lipschitz continuous gradient, \textit{i.e.}, there exists an $L_k>0$ s.t. $ \forall \vect{x},\vect{y}\in\reals^d$
	\begin{align}
		\norm{\nabla f_k(\vect{x}) - \nabla f_k(\vect{y})} \leq L_k\norm{\vect{y}-\vect{x}}^2.
		\label{eq:lipschitz-gradient}
	\end{align}
\end{assumption}

\begin{assumption}[Strong convexity]\label{as:mu convex}
	The local loss function $f_k$ is $\mu_k$-strongly convex, \textit{i.e.}, there exists an $\mu_k>0 ~\text{s.t. } \forall \vect{x},\vect{y}\in\reals^d$
	\begin{align}
		f_k(\vect{y})\geq f_k(\vect{x})+\nabla f_k(\vect{x})\transpose (\vect{y}-\vect{x})+\frac{\mu_k}{2}\norm{\vect{y}-\vect{x}}^2.
	\end{align}
\end{assumption}
\cref{as:L smoo,as:mu convex} imply that the global loss function $f(\vect{x})$ has $L$-Lipschitz continuous gradients and is $\mu$-strongly convex, with $L = \sum_{k=1}^KL_k$ and $\mu = \sum_{k=1}^K \mu_k$.

\begin{theorem}\label{thm:convergence}
Let $f^* = \min_{\vect{x}} f(\vect{x})$. Based on \cref{as:L smoo,as:mu convex,as:gradient dissimilarity,as:bounded threshold}, the iterates $\set{\vect{x}\ts{t}}_{t\in\naturals}$ generated by \cref{alg:ls system} satisfy
	\begin{align}
		&\EX{f(\vect{x}\ts{t}) -f^*}\leq \frac{\gamma L KG^2 P}{2\mu (1-b)}\\
        &~~  +(1-\mu\gamma(1-b))^{t-1}\left(f(\vect{x}\ts{1}) -f^* -\frac{\gamma L KG^2 P}{2\mu(1-b)}\right)\nonumber
	\end{align}
     if
\begin{align}
    &0 < \gamma \leq (1-b)/(L \cdot (KB^2P + (1+b)^2)),
\end{align}
where $P = \max\set{\bar{\alpha} -1 , c^2} $, $c = \max\left\{ c\ts{\tau}\right\}_{\tau\in[t-1]}$ and $b\geq\max\{ b\ts{\tau}\}_{\tau\in[t-1]}$, and
    $\bar{\alpha} \geq \max\set{\bar{\alpha}\ts{\tau}}_{\tau\in[t-1]}$.
\end{theorem}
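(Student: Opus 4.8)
The plan is to carry out a standard descent-lemma argument for $L$-smooth strongly convex functions, but with the gradient replaced by the aggregated imperfect gradient $\tilde{\vect{g}}\ts{t}$, and then to control the resulting error terms using \cref{lemma:first moment} and \cref{lemma:variance bound}. First I would start from $L$-smoothness of the global objective $f$ (which follows from \cref{as:L smoo} with $L=\sum_k L_k$) to write
\begin{align}
f(\vect{x}\ts{t+1}) \leq f(\vect{x}\ts{t}) - \gamma\scalarprod{\vect{g}\ts{t}}{\tilde{\vect{g}}\ts{t}} + \tfrac{\gamma^2 L}{2}\norm{\tilde{\vect{g}}\ts{t}}^2,
\end{align}
using the update $\vect{x}\ts{t+1}=\vect{x}\ts{t}-\gamma\tilde{\vect{g}}\ts{t}$. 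Then I would take the conditional expectation over the compressor randomness given $\vect{x}\ts{t}$. The cross term is handled by the lower bound in \cref{lemma:first moment}, giving $\scalarprod{\vect{g}\ts{t}}{\EX{\tilde{\vect{g}}\ts{t}}}\geq (1-b\ts{t})\norm{\vect{g}\ts{t}}^2$. For the quadratic term I would use the bias-variance decomposition $\EX{\norm{\tilde{\vect{g}}\ts{t}}^2}=\norm{\EX{\tilde{\vect{g}}\ts{t}}}^2+\VarX{\tilde{\vect{g}}\ts{t}}$, bounding $\norm{\EX{\tilde{\vect{g}}\ts{t}}}^2$ via a consequence of \cref{lemma:first moment} (e.g.\ $\norm{\EX{\tilde{\vect{g}}\ts{t}}}\leq(1+b\ts{t})\norm{\vect{g}\ts{t}}$, which should follow from the same orthogonality/thresholding estimates that prove the lemma) and bounding the variance by \cref{lemma:variance bound}.

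Next I would collect terms. After substitution the bound takes the form
\begin{align}
\EX{f(\vect{x}\ts{t+1})} \leq f(\vect{x}\ts{t}) - \gamma\!\left(1-b\ts{t} - \tfrac{\gamma L}{2}\big((1+b\ts{t})^2 + K B^2 P\ts{t}\big)\right)\!\norm{\vect{g}\ts{t}}^2 + \tfrac{\gamma^2 L}{2} K G^2 P\ts{t},
\end{align}
where $P\ts{t}=\max\{\bar\alpha\ts{t}-1,(c\ts{t})^2\}$. Imposing the step-size condition $\gamma\leq (1-b)/(L(KB^2 P+(1+b)^2))$ with the uniform-over-$\tau$ quantities $b,P$ ensures the coefficient of $-\gamma\norm{\vect{g}\ts{t}}^2$ is at least $\tfrac{\gamma}{2}(1-b\ts{t})\geq\tfrac{\gamma}{2}(1-b)$ --- here I need to check the arithmetic carefully, since the stated condition seems designed so that $\tfrac{\gamma L}{2}((1+b)^2+KB^2P)\leq \tfrac{1-b}{2}$, leaving at least $\tfrac{1-b}{2}$ in front of $\norm{\vect{g}\ts{t}}^2$. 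Then I would invoke $\mu$-strong convexity in the Polyak--Łojasiewicz form $\norm{\vect{g}\ts{t}}^2\geq 2\mu(f(\vect{x}\ts{t})-f^*)$ to get a contraction:
\begin{align}
\EX{f(\vect{x}\ts{t+1})-f^*} \leq (1-\mu\gamma(1-b))\,\EX{f(\vect{x}\ts{t})-f^*} + \tfrac{\gamma^2 L}{2}KG^2 P.
\end{align}

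Finally I would unroll this linear recursion from $t=1$. Writing $r = 1-\mu\gamma(1-b)$ and $D=\tfrac{\gamma^2 L}{2}KG^2P$, iterating gives $\EX{f(\vect{x}\ts{t})-f^*}\leq r^{t-1}(f(\vect{x}\ts{1})-f^*) + D\sum_{j=0}^{t-2}r^j \leq r^{t-1}(f(\vect{x}\ts{1})-f^*) + \tfrac{D}{1-r}$, and $\tfrac{D}{1-r} = \tfrac{\gamma^2 L KG^2 P}{2\mu\gamma(1-b)} = \tfrac{\gamma L KG^2 P}{2\mu(1-b)}$, which rearranges into exactly the claimed form. One should also verify $r\in(0,1)$, which holds because the step-size bound forces $\gamma\mu(1-b)<1$ (using $L\geq\mu$ and $P,B\geq 1$). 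I expect the main obstacle to be bookkeeping rather than deep ideas: specifically, making the step-size condition tight enough that the $\norm{\vect{g}\ts{t}}^2$ coefficient is controlled by $\tfrac{1-b}{2}$ while simultaneously ensuring $r>0$, and confirming that the upper bound $\norm{\EX{\tilde{\vect{g}}\ts{t}}}\leq(1+b\ts{t})\norm{\vect{g}\ts{t}}$ (needed for the quadratic term) is genuinely available from the proof of \cref{lemma:first moment} and not just the inner-product bound stated there --- if only the inner-product bound is available, I would instead bound $\EX{\norm{\tilde{\vect{g}}\ts{t}}^2}$ directly by combining $\norm{\tilde{\vect{g}}_k\ts{t}}\leq$ (something) $\cdot\norm{\vect{g}_k\ts{t}}$ per agent with \cref{as:gradient dissimilarity}, absorbing everything into $P$.
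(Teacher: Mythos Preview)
Your proposal is correct and is precisely the argument behind the paper's proof: the paper simply cites Theorem~4.6 of \cite{bottou2018optimization}, whose hypotheses are exactly the first-moment and variance bounds supplied by \cref{lemma:first moment,lemma:variance bound}, and whose proof is the descent-lemma $\to$ PL-inequality $\to$ linear-recursion unrolling that you spell out. Your caveat about needing $\norm{\EX{\tilde{\vect{g}}\ts{t}}}\leq(1+b\ts{t})\norm{\vect{g}\ts{t}}$ is well taken --- the paper leaves this implicit --- but it follows immediately from $\norm{\EX{\vect{\delta}\ts{t}}}\leq e_{\text{th}}\ts{t}=b\ts{t}\norm{\vect{g}\ts{t}}$, which is exactly the estimate used inside the proof of \cref{lemma:first moment}.
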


\begin{proof}
  The proof follows directly from Theorem 4.6 in \cite{bottou2018optimization}, which requires that the first moment and the variance of the noisy gradient are bounded by some linear functions of the true gradient norm. Given the bounds presented in Lemmas \ref{lemma:first moment} and \ref{lemma:variance bound}, we can obtain the result in our theorem.
\end{proof}
\begin{remark}
    When disabling both prediction (\textit{i.e.}, $\hat{\vect{g}}_k\ts{t} = \zerovec{d}, \forall k\in[K]$) and communication threshold (\textit{i.e.}, $c=0$, then $b = 0$),
    our system reduces to \ac{d-gd} with compressed gradients satisfying \cref{as:gradient dissimilarity}.
\end{remark}

\pgfplotsset{qgd/.append style={red, loosely dashdotdotted}}
\pgfplotsset{ef21/.append style={orange, dashdotted}}
\pgfplotsset{laq/.append style={violet, densely dotted}}
\pgfplotsset{proposed1/.append style={black, solid}}
\pgfplotsset{proposed3/.append style={blue, densely dashed }}
\pgfplotsset{proposed5/.append style={cyan,  dashdotted}}

\section{Simulation Results}
\begin{figure}[tb]
	\centering
    \includegraphics{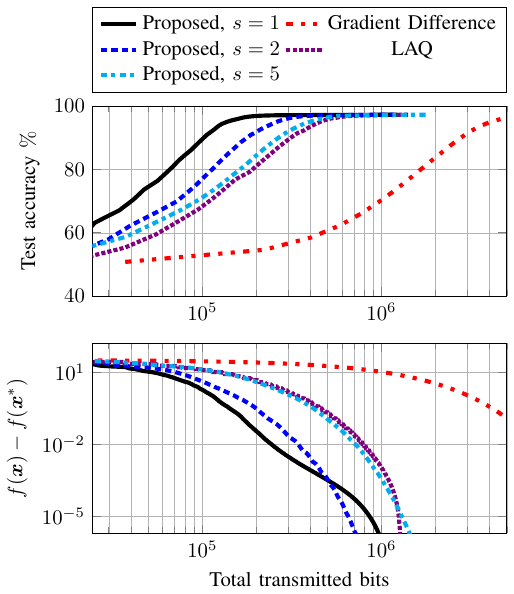}
	\caption{Training performance, using on average $R=6$ bits per residual element transmission. }
    \label{fig:res:high}
\end{figure}

We implement \cref{alg:ls system} for a regularized logistic regression task. The local loss function for agent $k$ is
\begin{align}
    f_k(\vect{x}) = \frac{1}{K} \sum_{i=1}^{N_k} \log\left( 1 + \exp\left( -y_{k,i} \vect{u}_{k,i}\transpose \vect{x} \right)\right) + \lambda \norm[2]{\vect{x}}^2,\nonumber
\end{align}
where $N_k$ is the number of data points at agent $k$, $(y_{k,i}, \vect{u}_{k,i})$ is the $i$-th data point at agent $k$,  and $\lambda =0.01$ is a regularization parameter.
We use the w8a dataset \cite{Chang2011libsvm}, which has $d=300$ parameters and $49,749$ data samples. The entire dataset is uniformly distributed across $K=10$ agents, such that $N_k = \lfloor 49749/K\rfloor = 4974$.
We use learning rate $\gamma = 0.05$.
The prediction is obtained by using $s$ past imperfect gradients. We apply $c\ts{t}=\max\set{0,(1-t/1000)/K}$, i.e., the communication threshold decreases over time until reaching $0$.

For residual (or gradient) compression, we employ a stochastic quantizer as defined in \cref{def:fixed step quantizer}.
\begin{definition}[Stochastic Fixed Interval Quantizer]\label{def:fixed step quantizer}
    Let $\Delta\in\reals_+$ denote the quantization interval. The quantizer $\QQX*{\vect{e}\ts{t}_k}$  for the $i$-th element is defined as
    \begin{align}
        [\QQX*{\vect{e}\ts{t}_k}]_i = \Delta(q_i + \xi_i)
    \end{align}
    for some $q_i\in\integers$ such that $\Delta q_i\leq [\vect{e}\ts{t}_k]_i <\Delta(q_i+1)$, where $\xi_i \sim \bernoulli{[\vect{e}\ts{t}_k]_i/\Delta -q_i}$.
\end{definition}
This unbiased quantizer maps a point to the neighboring quantization levels based on the proximity of the levels in a probabilistic manner.
After quantization, we then employ an entropy encoder (\textit{e.g.}, Huffman coding) to further compress the quantized residual.
For data transmission over digital links, we consider a rate-constrained scenario where each transmission is limited to $R\times d$ bits per iteration. This means that on average each quantized residual element consumes $R$ bits.
In addition to the compressed residuals, in every iteration each agent sends $s$ predictor coefficients and the quantization interval to the \ac{ps} with high resolution.

For performance comparison, we also simulate the following baseline schemes:
\begin{itemize}

    \item \textbf{Gradient Difference} \cite{alistarh2017qsgd}. In this case, the difference between the previously transmitted gradient and the current one is compressed and transmitted.

    \item \textbf{Lazily aggregated quantized gradient (LAQ)} \cite{sun2022laq}. LAQ uses a threshold-based criterion and activates a gradient transmission only when there is sufficient change in current gradient compared to the most recently transmitted one.
    The threshold depends on the quantization noise and a linear combination of the past $D=10$ model updates multiplied by a factor $0.8/D$.
    When a transmission occurs, gradient difference (residual) is quantized, compressed.
    Otherwise, the \ac{ps} uses the gradient from the previous iteration. If no residual is transmitted for $\bar{t} = 50$ consecutive iterations, the residual is sent regardless.
\end{itemize}
 To maintain fairness in the comparison, all schemes use the same stochastic quantizer presented in \cref{def:fixed step quantizer} and an entropy encoder.

\begin{figure}[tb]
	\centering
    \includegraphics{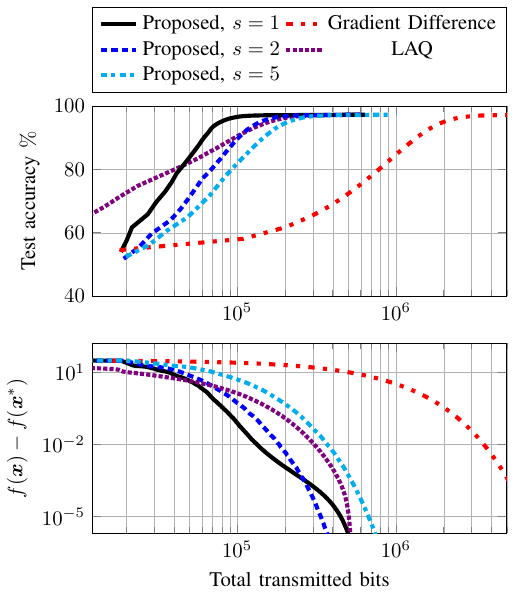}
	\caption{ Training performance, using on average $R=3$ bits per residual element transmission.}
    \label{fig:res:low}
\end{figure}

\colorlet{tabledark}{gray!30}
\colorlet{tablemid}{gray!15}
\newcommand{\mkdrk}{\cellcolor{tabledark}}
\newcommand{\mkmid}{\cellcolor{tablemid}}

\begin{table}[t!]
    \centering
    \begin{tabular}{|c|c|c|c|c|}\hline
        {\textbf{Method}} & \shortstack{\textbf{\# bits per}\\\textbf{ elem.} (R) }&{\textbf{\shortstack{Residual trans. \\frequency [$\%$]}}} & \# \textbf{iters. }& \shortstack{\# \textbf{bits}\\ $[\times 10^5]$} \\\hline
        \multirow{2}{*}{\shortstack{Gradient \\ Difference}}
        &  3.0 &  100.00 &  $ 731$ &  $   66.63$ \\ \cline{2-5}
        & \mkdrk 6.0 & \mkdrk 100 & \mkdrk $ 731$ & \mkdrk $  135.20$ \\ \hline
        \multirow{2}{*}{LAQ}
        &  3.0 &    8.18 &  $ 678$ &  $    5.06$ \\ \cline{2-5}
        & \mkdrk 6.0 & \mkdrk   9.72 & \mkdrk $ 691$ & \mkdrk $   12.49$ \\ \hline
        \multirow{2}{*}{\shortstack{ Proposed \\[-1mm] $s=1$ }}
        &  3.0 &    5.01 &  $ 725$ &  $    4.49$ \\ \cline{2-5}
        & \mkdrk 6.0 & \mkdrk   4.75 & \mkdrk $ 726$ & \mkdrk $    8.60$ \\ \hline
        \multirow{2}{*}{\shortstack{ Proposed \\[-1mm] $s=2$ }}
        &  3.0 &    1.52 &  $ 732$ &  $    3.37$ \\ \cline{2-5}
        & \mkdrk 6.0 & \mkdrk   1.37 & \mkdrk $ 733$ & \mkdrk $    6.54$ \\ \hline
        \multirow{2}{*}{\shortstack{ Proposed \\[-1mm] $s=3$ }}
        &  3.0 &    1.30 &  $ 732$ &  $    4.39$ \\ \cline{2-5}
        & \mkdrk 6.0 & \mkdrk   1.01 & \mkdrk $ 733$ & \mkdrk $    8.40$ \\ \hline
        \multirow{2}{*}{\shortstack{ Proposed \\[-1mm] $s=5$ }}
        &  3.0 &    1.28 &  $ 733$ &  $    6.73$ \\ \cline{2-5}
        & \mkdrk 6.0 & \mkdrk   0.94 & \mkdrk $ 733$ & \mkdrk $   13.00$ \\ \hline
    \end{tabular}
    \caption{Comparison of communication cost until $f(\vect{g}\ts{t}) -f(\vect{x}^*) \leq 10^{-5}$. All schemes use a stochastic quantizer followed by an entropy encoder with the same number of bits ($R\times d$) per transmission. }
    \label{tab:res:transmission prob}
\end{table}

First, we consider a high rate scenario with (on average) $R=6$ bits per element transmission, and $32$ bits for the quantization interval and predictor coefficient. \cref{fig:res:high} shows the learning performance as a function of the total number of transmitted bits. We observe notable performance gains for our proposed design as compared to the baseline schemes.
Next, we consider a low rate scenario which on average uses $R=3$ bits per element transmission, and $16$ bits per quantization interval and predictor coefficient. The results are shown in \cref{fig:res:low}.
Our design still outperforms the others, but the performance difference compared to LAQ becomes smaller.

Since our design combines prediction-based compression with an event-triggered mechanism, it is important to see how much communication is saved by omitting additional residual transmission when the prediction is sufficiently accurate. In \cref{tab:res:transmission prob}, we compare our proposed method and LAQ (both using event-triggered transmission), where we see that our method significantly reduces the residual transmission frequency. Additionally, as expected, we see that using a larger $s$ lowers the frequency of residual transmission, owing to more accurate prediction with longer memory. However, using a larger $s$ requires transmitting more predictor coefficients with high precision, which increases the communication cost.

Finally, we aim to examine the interplay between compression distortion and memory length in terms of their impacts on the communication cost. Another baseline scheme is considered for comparison, namely \textbf{EF21} \cite{richtarik2021ef21}, which is similar to Gradient Difference in terms of residual computation, but uses a Top-$L$ sparsifier, \textit{i.e.}, preserving the $L$ largest elements by magnitude.\footnote{Sparsification generally introduces bias in the received imperfect gradient, which degrades the prediction accuracy.}
Larger compression distortion causes higher noise in the imperfect gradients stored in the memory of the \ac{ps} and agents, which leads to larger prediction errors even when the local gradients are strongly correlated over time. Generally, using a longer memory improves prediction performance. However, with biased compression such as Top-$L$ sparsification, using longer memory does not necessarily reduces the communication cost, as shown in \cref{tab:res:topl}.
We see that when $L=1$, we always transmit the residual, regardless of the memory length $s$. When $L$ grows, the residual transmission frequency decreases slowly. Finding an optimal combination of memory length and compression budget remains to be further investigated.

\begin{table}[t!]
    \centering
    \begin{tabular}{|c|c|c|c|c|}\hline
        {\textbf{Method}} & $L$ & {\textbf{\shortstack{Residual trans. \\frequency [$\%$]}}} & \# \textbf{iters.}& \shortstack{\# \textbf{channel}\\\textbf{uses} $[\times10^3]$ } \\\hline
         \multirow{3}{*}{\shortstack{ EF21 }}
         & 1 & 100 & 1255 & 12.6  \\  \cline{2-5}
         &\mkmid 5 &\mkmid 100 &\mkmid 359 &\mkmid 17.9  \\  \cline{2-5}
         & \mkdrk15 &\mkdrk 100 &\mkdrk 393 & \mkdrk 58.9  \\ \hline
        \multirow{3}{*}{\shortstack{ Proposed \\[-1mm] $s=1$ }}
        & 1 & 100 & 919 & 18.3  \\  \cline{2-5}
        &\mkmid 5 &\mkmid 65.0 &\mkmid 720 &\mkmid 30.0  \\  \cline{2-5}
        &\mkdrk 15 &\mkdrk 40.0 &\mkdrk 714 & \mkdrk 51.2  \\ \hline
        \multirow{3}{*}{\shortstack{ Proposed \\[-1mm] $s=5$ }}
        & 1 & 100 & 826 & 49.6  \\  \cline{2-5}
        &\mkmid 5 &\mkmid 59.8 &\mkmid 728 &\mkmid 58.2  \\  \cline{2-5}
        &\mkdrk 15 &\mkdrk 20.6 &\mkdrk 734 & \mkdrk 59.4  \\ \hline
    \end{tabular}
    \caption{Comparison of communication cost until $f(\vect{g}\ts{t}) -f(\vect{x}^*) \leq 10^{-5}$, when using a Top-$L$ sparsifier without further quantization and entropy coding. ``\# channel uses'' counts the total number of transmissions, including residual elements and prediction coefficients (if any), sent over the channel.}
    \label{tab:res:topl}
\end{table}

\section{Conclusions}
In this work, we propose a communication-efficient gradient compression framework for distributed learning systems. After each round of local training, each agent computes a predicted gradient using past gradient information, and the prediction residual is obtained as the difference between the true and predicted gradients. The \ac{ps} performs the same prediction of the local gradients using the same memory information and optimized coefficients. Consequently, communication can be entirely avoided in a round if the prediction is sufficiently accurate. This combination of predictive encoding and event-triggered transmission leads to two types of compression noise: when the residual is transmitted or when communication is omitted. We provide both theoretical analysis of the convergence performance of the proposed design and simulation results showing its effectiveness in improving training performance with a reduced number of transmitted bits. Future directions include exploring the potential correlation among agents to perform joint prediction and user clustering.

\bibliographystyle{IEEEtran}
\bibliography{latex-setup/otherfull,IEEEfull,latex-setup/ref}

\appendix

\subsection{Proof of Lemma \ref{lemma:first moment} (First Moment Bound)}\label{apx:first moment:new}
\begin{align}\label{eq:apx:first:scalar}
    \scalarprod{\vect{g}\ts{t}}{\EX{\tilde{\vect{g}}\ts{t}}} & = \scalarprod{\vect{g}\ts{t}}{\vect{g}\ts{t} + \EX{\vect{\delta}\ts{t}}}\nonumber\\
    &= \norm{\vect{g}\ts{t}}^2+\scalarprod{\vect{g}\ts{t}}{\EX{\vect{\delta}\ts{t}}}
\end{align}

\begin{itemize}
    \item
When $\norm*{\vect{e}_k\ts{t}} > e_\text{th,k}\ts{t}$, the compressed residual $\tilde{e}_k\ts{t}$ is transmitted. Using the fact that $\QQX{\cdot}$ is unbiased, $\EX{\vect{\delta}\ts{t}} = \vect{e}\ts{t} - \vect{e}\ts{t} = 0$.
\item
When $\norm*{\vect{e}_k\ts{t}} \leq e_\text{th,k}\ts{t}$, the residual transmission is omitted, and a bias is introduced. This bias will be largest when all agent simultaneously omit transmission. Consequently, we can lower bound \eqref{eq:apx:first:scalar} by
\begin{align}
    \scalarprod{\vect{g}\ts{t}}{{\tilde{\vect{g}}\ts{t}}} &\geq \norm{\vect{g}\ts{t}}^2-\norm{\vect{g}\ts{t}}\norm{\vect{\delta}\ts{t}}\nonumber\\
    &\geq \norm{\vect{g}\ts{t}}^2-\norm{\vect{g}\ts{t}} \left(\sum_{k=1}^K e_{\text{th},k}\ts{t}\right)\nonumber\\
    & = \norm{\vect{g}\ts{t}}^2-\norm{\vect{g}\ts{t}} e_{\text{th}}\ts{t}.
\end{align}
From \cref{as:bounded threshold}, one obtains
\begin{align}\label{eq:apx:first:lower}
    \scalarprod{\vect{g}\ts{t}}{{\tilde{\vect{g}}\ts{t}}} &= \norm{\vect{g}\ts{t}}^2(1-b\ts{t}) > 0,
\end{align}

By upper bounding the scalar product in \eqref{eq:apx:first:scalar}, we get similar calculations for the upper bound
    \begin{align}
    \scalarprod{\vect{g}\ts{t}}{\tilde{\vect{g}}\ts{t}}
    &\leq \norm{\vect{g}\ts{t}}^2+\norm{\vect{g}\ts{t}}\norm{\vect{\delta}\ts{t}}\nonumber\\
    &\leq \norm{\vect{g}\ts{t}}^2+\norm{\vect{g}\ts{t}} \left(\sum_{k=1}^K e_{\text{th},k}\ts{t}\right)\nonumber\\
    &= \norm{\vect{g}\ts{t}}^2+\norm{\vect{g}\ts{t}} e_{\text{th}}\ts{t}\nonumber\\
    &= \norm{\vect{g}\ts{t}}^2(1+b\ts{t}).\label{eq:apx:first:upper}
\end{align}
\end{itemize}
Combining \eqref{eq:apx:first:lower} and \eqref{eq:apx:first:upper}, we get the desired bound
\begin{align}
    (1-b\ts{t}) \norm{\vect{g}\ts{t}}^2 \leq \scalarprod{\vect{g}\ts{t}}{\EX{\tilde{\vect{g}}\ts{t}}} \leq(1+b\ts{t})\norm{\vect{g}\ts{t}}^2.
\end{align}

\subsection{Proof of Lemma \ref{lemma:variance bound} (Variance Bound)}\label{apx:variance bound:new}
As each agent compresses independently, the variance can be separated in the sum of local variances as
\begin{align}\label{eq:apx:var}
    \VarX{\tilde{\vect{g}}\ts{t}} = \sum_{k=1}^K \VarX{\tilde{\vect{g}}_k\ts{t}} = \sum_{k=1}^K \VarX{{\vect{\delta}}_k\ts{t}}.
\end{align}
\begin{itemize}
    \item
If agent $k$ does not transmit the residual,
$\norm*{\vect{\delta}_k\ts{t}} \leq e_{\text{th},k}\ts{t} = c_k\ts{t}\norm*{\vect{g}_k\ts{t}}\leq c\ts{t}\norm*{\vect{g}_k\ts{t}} $, and
\begin{align}\label{eq:apx:var:single agent:th}
    \VarX{{\vect{\delta}}_k\ts{t}} \leq (e_{\text{th},k}\ts{t})^2 = (c_k\ts{t})^2\norm{\vect{g}_k\ts{t}}^2 \leq (c\ts{t})^2\norm{\vect{g}_k\ts{t}}^2.
\end{align}

\item
If agent $k$ transmits its residual, \textit{i.e.}, $\norm*{\vect{e}_k\ts{t}} > e_\text{th,k}\ts{t}$, $\vect{\delta}_k\ts{t}$ has zero mean, and has the variance
\begin{align}
    \VarX{{\vect{\delta}}_k\ts{t}} &=
     \EX{\norm{ \vect{e}_k\ts{t} - \EX*{\QQX*{\vect{e}_k\ts{t}}} }^2} \nonumber \\
	&= \EX{\norm{\vect{e}_k\ts{t}}^2} - 2\EX{\scalarprod{\vect{e}_k\ts{t}}{\QQX*{\vect{e}_k\ts{t}}}} \nonumber\\&\mspace{160mu}+\EX{\norm{\QQ(\vect{e}_{k}\ts{t})}^2} \nonumber \\
    &= \EX{\norm{\QQ(\vect{e}_{k}\ts{t})}^2} - \norm{\vect{e}_k\ts{t}}^2.
\intertext{Using \eqref{eq:urc:second moment}, one gets}
	\VarX{{\vect{\delta}}_k\ts{t}}&\leq \alpha\norm{\vect{e}_{k}\ts{t}}^2 - \norm{\vect{e}_k\ts{t}}^2 \nonumber \\
    &= (\alpha-1)\norm{\vect{e}_{k}\ts{t}}^2
	= (\alpha_{k}\ts{t}-1)\norm{\vect{g}_{k}\ts{t}}^2 \label{eq:apx:var:single agent:alpha}
\end{align}
where
\begin{align}
    \alpha_{k}\ts{t} & = 1  + (\alpha-1) \frac{\norm{\vect{e}_{k}\ts{t}}^2}{\norm{\vect{g}_k\ts{t}}^2}.
    \label{eq:comp pred:alpha improvement residual}
\end{align}
Since $\norm*{\vect{e}_{k}\ts{t}} \leq \norm*{\vect{g}_{k}\ts{t}}$, $\alpha_{k}\ts{t} \in [1,\alpha]$.
Let $\bar{\alpha}\ts{t} = \max\set*{ \alpha_{k}\ts{t}}_{k\in[K]}$ and insert \eqref{eq:apx:var:single agent:th} and \eqref{eq:apx:var:single agent:alpha} into \eqref{eq:apx:var}, we get
\begin{align}\label{eq:apx:final variance}
    \VarX{\tilde{\vect{g}}\ts{t}}  \leq& \sum_{k=1}^K \max\set{\alpha_k\ts{t} -1 , (c_k\ts{t})^2}\norm{\vect{g}_k\ts{t}}^2 \nonumber\\
     \leq& \max\set{\bar{\alpha}\ts{t} -1 , (c\ts{t})^2} \sum_{k=1}^K \norm{\vect{g}_k\ts{t}}^2. \nonumber\\
 \intertext{Using \cref{as:gradient dissimilarity}, we get}
    \VarX{\tilde{\vect{g}}\ts{t}} \leq& K \max\set{\bar{\alpha}\ts{t} -1 , (c\ts{t})^2}\left(G^2 + B^2\norm{\vect{g}\ts{t}}^2\right)\nonumber\\
    =& K G^2 \max\set{\bar{\alpha}\ts{t} -1 , (c\ts{t})^2} \\ & ~~ +  K B^2\max\set{\bar{\alpha}\ts{t} -1 , (c\ts{t})^2}\norm{\vect{g}\ts{t}}^2.\nonumber
\end{align}

\end{itemize}
\end{document}